\theoremstyle{plain}
\newtheorem{theorem}{Theorem}
\newtheorem{lemma}[theorem]{Lemma}
\newtheorem{corollary}[theorem]{Corollary}
\theoremstyle{nonumberplain}
\newtheorem{condA}{Condition~A}
\newtheorem{condB}{Condition~B}
\theoremstyle{plain}
\newtheorem{example}{Example}
\theoremstyle{nonumberplain}
\theoremstyle{nonumberplain}
\newtheorem{proofsketch}{Sketch of proof}
\newtheorem{proof}{Proof}
\def\Hx{ \widehat{x} }
\def\Hy{ \widehat{y} }
\def\Hbx{ \widehat{\mathbf{x}}}
\def\Hby{ \widehat{\mathbf{y}}}
\def\Hbu{ \widehat{\mathbf{u}}}
\def\d{ \mathrm{d} }							
\def\I{ \mathrm{i} }							
\def\E{ \mathrm{e} }							
\def\T{ \mathrm{T} }							
\def\NN{ \mathbb{N} }						
\def\ZN{ \mathbb{Z} }						
\def\RN{ \mathbb{R} }						
\def\CN{ \mathbb{C} }						
\def\TT{ \mathbb{T} }                               
\def\L{ \mathcal{L} }								
\def\ourPW{ \mathcal{PW}_{T/2} }
\def\ourL2{ \mathcal{L}^{2}(\TT)}
\def\ourRT{ \mathcal{R}_{M}(\TT)}
\def\ourB{ \mathcal{B}^{\infty}_{T'/2}}
\def\bsc{ \boldsymbol{c} }						
\def\balpha{ \boldsymbol{\alpha} }
\def\blambda{ \boldsymbol{\lambda} }
\def\bx{ \mathbf{x} }								
\def\bc{ \mathbf{c} }								
\DeclareMathOperator*{\sinc}{sinc}							
\begin{document}

\title{Phase Retrieval via Structured Modulations\\ in Paley-Wiener Spaces}

\author{\IEEEauthorblockN{Fanny~Yang, Volker~Pohl, Holger~Boche}
\IEEEauthorblockA{Lehrstuhl f{\"u}r Theoretische Informationstechnik\\
	       Technische Universit{\"a}t M{\"u}nchen, 80290 M{\"u}nchen, Germany\\
	       { \{fanny.yang, volker.pohl, boche\}@tum.de} }}

\maketitle
\begin{abstract}
This paper considers the recovery of continuous time signals from the magnitude of its samples.
It uses a combination of structured modulation and oversampling and provides sufficient conditions on the signal and the sampling system such that signal recovery is possible.
In particular, it is shown that an average sampling rate of four times the Nyquist rate is sufficient to reconstruct a signal from its magnitude measurements.
\end{abstract}
\begin{keywords}
Bernstein spaces, Paley-Wiener spaces, phase retrieval, sampling
\end{keywords}

\section{Introduction}
\label{sec:Intro}

In many applications, only intensity measurements are available to reconstruct a desired signal $x$. 
This is widely known as the phase retrieval problem which e.g. occurs in diffraction imaging applications such as X-ray crystallography, astronomical imaging, or in speech processing. 

In the past, concrete efforts have been made on the recovery of finite dimensional signals from the modulus of their Fourier transform. 
In general however, they require strong limitations on the signal such as constraints on its z-transforms \cite{Oppenheim_Phase_80} or knowledge of its support \cite{Fienup1982_PhaseRetrieval}. 
Furthermore, analytic frame-theoretic approaches were considered in \cite{Balan1,Balan_Painless} which require that the number of measurements grows proportionally with the square of the space dimension. Ideas of sparse signal representation and convex optimization where applied in \cite{Vetterli2011_PRbeyondFienup,CandesEldar_PhaseRetrieval} to allow for lower computational complexity.

Note that all the above approaches addressed finite dimensional signals and the question is whether similar results can be obtained for continuous signals in infinite dimensional spaces.
In \cite{Thakur2011} it was shown that real valued bandlimited signals are completely determined by its magnitude samples taken at twice the Nyquist rate.
For complex valued signals, the results for finite dimensional spaces \cite{Balan1,Balan_Painless,CandesEldar_PhaseRetrieval} indicate that oversampling alone may not be sufficient for signal reconstruction from magnitude samples, since their particular choice of measurement vectors was the key to enabling signal recovery.

In this work we are looking at continuous signals in Paley-Wiener spaces.
Our approach extends ideas from \cite{Balan1,Balan_Painless,CandesEldar_PhaseRetrieval} by applying a bank of modulators before the intensity measurements.
Then we are able to reconstruct the signals from samples taken at a rate of four times the Nyquist rate.

Basic notations for sampling and reconstruction in Paley-Wiener spaces are recaptured in Sec.~\ref{sec:PWSpaces}, Sec.~\ref{sec:Measurement} describes our sampling setup.
In Sec.~\ref{sec:MainResult} we provide sufficient conditions for perfect signal reconstruction from magnitude measurements of the Fourier Transform.
The paper closes with a short discussion in Sec.\ref{sec:Summary}.

\section{Sampling in Paley-Wiener Spaces}
\label{sec:PWSpaces}

Let $\mathbb{S} \subseteq \RN$ be an arbitrary subset of the real axis $\RN$.
For $1 \leq p \leq \infty$ we write $\L^{p}(\mathbb{S})$ for the usual Lebesgue space on $\mathbb{S}$.
In particular, $\L^{2}(\mathbb{S})$ is the Hilbert space of square integrable functions on $\mathbb{S}$ with the inner product
\begin{eqnarray*}
	& \left\langle x,y \right\rangle_{\L^{2}(\mathbb{S})} = \int_{\mathbb{S}} x(\theta)\, \overline{y(\theta)}\, \d\theta
\end{eqnarray*}
where the bar denotes the complex conjugate. In finite dimensional spaces $\langle x,y \rangle = y^*x$ where $^*$ denotes the conjugate transpose.
Let $T>0$ be a real number. Throughout this paper $\TT = [-T/2,T/2]$ stands for the closed interval of length $T$, and
$\ourPW$ denotes the \emph{Paley-Wiener space} of entire functions of exponential type $T/2$ whose restriction to $\RN$ belongs to $\L^2(\RN)$.
The Paley-Wiener theorem 
states that to every $\Hx \in \ourPW$ there exists an $x \in \ourL2$ such that
\begin{eqnarray}
\label{equ:PaleyWiener}
& \Hx(z) = \int_{\TT}x(t)\,\E^{\I tz}\,\d t\,,
\quad\text{for all}\ z\in\CN\;.
\end{eqnarray}
In the following we will call $x$ the signal in the \emph{time domain} and $\Hx$ the signal in the \emph{Fourier domain}, since its restriction to the real axis is a Fourier Transform.

If not otherwise noted, our signal space will be $\ourL2$, i.e. we consider signals of finite energy which are supported on the finite interval $\TT$. 
This is a natural assumption since signals in reality are usually finite in time.
It follows from the Paley-Wiener theorem that to every signal $x \in \ourL2$ there corresponds an $\Hx \in \ourPW$ given by \eqref{equ:PaleyWiener}.

A sequence $\Lambda = \{\lambda_{n}\}_{n\in\ZN}$ of complex numbers is said to be \emph{complete interpolating} for $\ourPW$ if and only if the functions $\{\phi_n(t) := \E^{-\I\lambda_n t}\}_{n\in \ZN}$ form a Riesz basis for $\ourL2$ \cite{Young2001_Nonharmonic}. Let $x \in \ourL2$ be arbitrary. Then it follows from \eqref{equ:PaleyWiener} that
\begin{eqnarray*}
& \Hx(\lambda_{n})
= \left\langle x , \phi_{n} \right\rangle_{\ourL2}
\quad\text{for all}\ n\in\ZN\;.
\end{eqnarray*}
Since $\{\phi_{n}\}_{n\in\ZN}$ is a Riesz basis for $\ourL2$ the signal $x$ can be reconstructed from the samples $\Hx(\Lambda) = \{\Hx(\lambda_{n})\}_{n\in \ZN}$ by
\begin{eqnarray}
\label{equ:TimeReconstruct}
	& x(t)
	= \sum_{n\in\ZN} \left\langle x,\phi_{n}\right\rangle\, \psi_{n}(t)
	= \sum_{n\in\ZN} \Hx(\lambda_{n})\, \psi_{n}(t)\;,
\end{eqnarray}
where $\{\psi_{n}\}_{n\in\ZN}$ is the unique dual Riesz basis of $\{\phi_{n}\}_{n\in\ZN}$ \cite{Christensen_Frames}. It is well-known that in the Fourier domain 
\begin{equation*}
	\widehat{\psi}_{n}(z)
	= \frac{S(z)}{S'(\lambda_{n}) (z - \lambda_{n})}
	\ \text{with}\
	S(z) = z^{\delta_\Lambda} \!\!\! \lim_{R\to\infty} \!\! \prod_{\substack{|\lambda_{n}| < R \\ \lambda_{n} \neq 0}} \!\!\!\! \big(1 - \frac{z}{\lambda_{n}} \big)
\end{equation*}
with $\delta_\Lambda = 1$ if $0\in \Lambda$ and $\delta_{\Lambda} = 0$ otherwise, where the \emph{generating function} $S$ is an entire function of exponential type $T/2$.
The infinite product converges uniformly on compact subsets of $\CN$ if $\Lambda$ is a complete interpolating sequence (see \cite{Pavlov_Convergence}). 
Note that in order to obtain the signal in the time domain by \eqref{equ:TimeReconstruct}, it is sufficient to find the inverse Fourier transform of $\widehat{\psi}_n(\omega), \omega \in \RN$.

\begin{example}
\label{exa:Shannon}
The well known Shannon sampling series is obtained for regular sampling with $\lambda_{n} = n \tfrac{2\pi}{T}$, $n\in\ZN$.
Then $S(z) = \sin(\frac{T}{2} z)$ and $\widehat{\psi}_n(z) = \sinc(\frac{T}{2}[z-n\frac{2\pi}{T}])$ where $\sinc(x) := \sin(x)/x$.
This corresponds to modulated rectangular functions in the time domain such that \eqref{equ:TimeReconstruct} becomes
$x(t) = \sum_{n\in\ZN} \Hx(\lambda_{n})\, \E^{-\I n \frac{2\pi}{T} t}$ for all $t\in\TT$.
\end{example}

\section{Measurement Methodology}
\label{sec:Measurement}

We apply a measurement methodology which uses oversampling in connection with structured modulations of the desired signal, inspired by the approach in \cite{CandesEldar_PhaseRetrieval}.
Suppose $x \in \ourL2$ is the signal of interest. Although the loss of phase information is intrinsic to the measurement procedure, it is possible to influence the desired signal before the actual measurement.
More precisely in our sampling scheme in Fig.~\ref{fig:MeasureSetup}, we assume that the signal of interest $x$ is multiplied with $M$ known modulating functions $p^{(m)}$. In optics, these modulations may be different diffraction gratings between the object (the desired signal) and the measurement device \cite{CandesEldar_PhaseRetrieval}.
This way we obtain a collection of $M$ representations (or illuminations) $y^{(m)}$ of $x$. Afterwards, the modulus of the Fourier spectra $\widehat{y}^{(m)}$ are measured and uniformly sampled with frequency spacing $\beta$.

%
\begin{figure}[t]
\begin{center}
\begin{picture}(240,95)(10,25)
	\put(20,77){\makebox(0,0){\footnotesize $x(t)$}}
	\put(10,70){\vector(1,0){20}}
	\put(30,40){\line(0,1){60}}
	\put(30,100){\vector(1,0){21}}
	\put(55,100){\circle{8}}		\put(55,100){\makebox(0,0){\footnotesize$\times$}}
	\put(55,85){\vector(0,1){11}}	\put(65,80){\makebox(0,0){\footnotesize $p^{(1)}(t)$}}
	\put(59,100){\line(1,0){36}}
	\put(80,107){\makebox(0,0){\footnotesize $y^{(1)}$}}
	\put(55,70){\makebox(0,0){\footnotesize $\vdots$}}
	\put(30,40){\vector(1,0){21}}
	\put(55,40){\circle{8}}   	\put(55,40){\makebox(0,0){\footnotesize$\times$}}
	\put(55,25){\vector(0,1){11}}	\put(65,20){\makebox(0,0){\footnotesize $p^{(M)}(t)$}}
	\put(59,40){\line(1,0){36}}
	\put(80,47){\makebox(0,0){\footnotesize $y^{(M)}$}}
	\put(95,110){\line(1,0){20}}
	\put(95,90){\line(1,0){20}}
	\put(95,90){\line(0,1){20}}
	\put(115,90){\line(0,1){20}}
	\put(105,100){\makebox(0,0){\footnotesize IM}}
	\put(115,100){\line(1,0){40}}
	\put(135,109){\makebox(0,0){\footnotesize $\big|\widehat{y}^{(1)}\big|^{2}$}}
	\put(105,70){\makebox(0,0){\footnotesize $\vdots$}}	
	\put(95,50){\line(1,0){20}}
	\put(95,30){\line(1,0){20}}
	\put(95,30){\line(0,1){20}}
	\put(115,30){\line(0,1){20}}
	\put(105,40){\makebox(0,0){\footnotesize IM}}
	\put(115,40){\line(1,0){40}}
	\put(135,49){\makebox(0,0){\footnotesize $\big|\widehat{y}^{(M)}\big|^{2}$}}
	\put(155,100){\line(2,1){10}}
	\qbezier(156,106)(161,105)(162,100)	\put(162,98){\vector(0,-1){2}}
	\put(165,100){\vector(1,0){20}}
	\put(205,107){\makebox(0,0){\footnotesize $c^{(1)}_{n} = |\widehat{y}^{(1)}(n\beta)|^{2}$}}
	\put(160,70){\makebox(0,0){\footnotesize $\vdots$}}		
	\put(155,40){\line(2,1){10}}
	\qbezier(156,46)(161,45)(162,40)		\put(162,38){\vector(0,-1){2}}
	\put(165,40){\vector(1,0){20}}
	\put(210,47){\makebox(0,0){\footnotesize $c^{(M)}_{n}  = |\widehat{y}^{(M)}(n\beta)|^{2}$}}
\end{picture}
\end{center}
\caption{Measurement setup: In each branch, the unknown signal $x$ is modulated with a different sequence $p^{(m)}$, $m=1,2,\dots,M$.
Subsequently, the intensities of the resulting signals $y^{(m)}$ are measured (IM) and uniformly sampled in the frequency domain.}
\label{fig:MeasureSetup}
\end{figure}
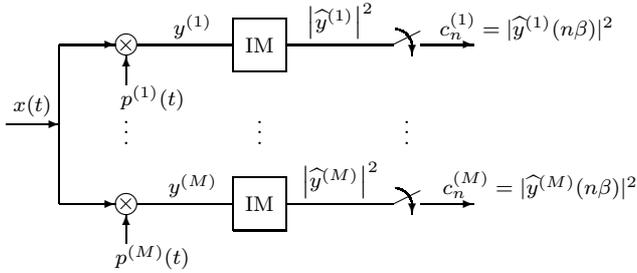

Let $p^{(m)}$ have the following general form
\begin{eqnarray}
\label{equ:ModSig}
	& p^{(m)}(t) := \sum^{K}_{k=1} \overline{\alpha^{(m)}_{k}} \E^{\I\lambda_k t}
\end{eqnarray}
where $\lambda_{k}$ and $\alpha^{(m)}_{k}$ are complex coefficients which are determined subsequently. 
Consequently the samples in the $m$th branch are given by
\begin{eqnarray}
\label{equ:IntensMeasure}
	c^{(m)}_{n}
	&=& |\widehat{y}^{(m)}(n \beta)|^{2}
	= \left| \sum^{K}_{k=1} \overline{\alpha^{(m)}_{k}}\,  \Hx(n \beta + \lambda_{k}) \right|^{2}\nonumber\\
	&=& |\langle  \widehat{\bx}_{n},\balpha^{(m)} \rangle|^2
\end{eqnarray}
with the length $K$ vectors
\begin{equation*}
\balpha^{(m)} := \left(\begin{array}{c}
	\alpha_1^{(m)} \\ \vdots \\ \alpha_K^{(m)}
	\end{array}\right)
\ \ \text{and}\ \
\Hbx_{n} := \left(\begin{array}{c}
	\Hx(n\beta + \lambda_1) \\ \vdots \\ \Hx(n\beta + \lambda_K)
	\end{array}\right)\;.
\end{equation*}
We will show that if $\balpha^{(m)}$ and the interpolation points $\lambda_{n,k} := n\beta + \lambda_{k} : n\in\ZN\;, k=1,\dots,K$ are properly chosen, it is possible to reconstruct $x$ from all samples $\bc = \{ c^{(m)}_{n} : m=1,\dots,M\, ;\, n\in\ZN \}$. 

\subsection{Choice of the coefficients $\alpha^{(m)}_{k}$} 
\label{sec:Balan}
In order to determine the vector $\Hbx_{n} \in \CN^{K}$ from the $M$ intensity measurements $\bc^{(m)}$, we apply a result from \cite{Balan_Painless}.
It states that if the family of $\CN^{K}$-vectors $\mathcal{A} = \{\balpha^{(1)}, \dots,\balpha^{(M)} \}$ 
constitutes a $2$-uniform $M/K$-tight frame which contains $M = K^{2}$ vectors or $\mathcal{A}$ is a union of $K+1$ mutually unbiased bases in $\CN^{K}$, then every $\Hbx_{n} \in \CN^{K}$ can be reconstructed up to a constant phase from the magnitude of the inner products \eqref{equ:IntensMeasure}.
For simplicity, we will only discuss the first case here and therefore fix $M = K^{2}$.
\begin{condA}
A sampling system as in Fig.~\ref{fig:MeasureSetup} is said to satisfy Condition~A if the coefficients $\alpha^{(m)}_{k}$ in \eqref{equ:ModSig} are such that $\mathcal{A}$ constitutes a $2$-uniform $M/K$-tight frame.
\end{condA}
Then reconstruction will be based on the following formula
\begin{equation}
\label{equ:BalanReconstruct}
Q_{\Hat{\bx}_{n}} 
= \frac{(K+1)}{K}\sum_{m=1}^{M}c_n^{(m)}Q_{\balpha^{(m)}} - \frac{1}{K}\sum_{m=1}^{M}c_n^{(m)}I
\end{equation}
with rank-$1$ matrices $Q_\bx = \bx\bx^{*}$. 
For $K=2$ a valid choice  for $\mathcal{A}$ reads \cite{Balan_Painless}
\begin{equation*}
	\balpha^{(1)} =  \binom{\alpha}{\beta},\
	\balpha^{(2)} =  \binom{\beta}{\alpha},\
	\balpha^{(3)} =  \binom{\alpha}{-\beta},\
	\balpha^{(4)} =  \binom{-\beta}{\alpha}
\end{equation*}
with $\alpha = \sqrt{\frac{1}{2}(1-\frac{1}{\sqrt{3}})}$ and $\beta = \E^{\I 5\pi/4}\sqrt{\frac{1}{2}(1+\frac{1}{\sqrt{3}})}$.

\subsection{Choice of the interpolation points}
\label{sec:Sampling}

Let $\{\lambda_k\}_{k=1}^K$ be ordered increasingly by their real parts.
For each $n\in\ZN$, the vector $\Hbx_{n}$ contains the values of $\Hx$ at $K$ distinct interpolation points in the complex plane
collected in the sequences
\begin{equation}
\label{equ:SampSets}
	\blambda^a_n := \{\lambda^a_{n,k}\}^{K}_{k = 1}
	\quad\text{with}\quad
	\lambda^a_{n,k} =  n\beta + \lambda_k\;,
	\quad n\in\ZN\;.
\end{equation}
Therein, the parameter $a\in\NN$ denotes the number of overlapping points of consecutive sets \eqref{equ:SampSets} (cf. also Fig.\ref{fig:SpampPoints}).
More precisely, we require for every $n\in\ZN$ that
\begin{equation}
\label{equ:CondSampPoint}
\lambda^a_{n,i} = \lambda^a_{n-1,K-i+1}\quad \text{for all}\ i = 1,\dots, a\;.
\end{equation}
In the following we denote by $\Lambda_{O,n}^a = \blambda^a_n \cap \blambda^a_{n+1}$ the set of overlapping interpolation points between $\blambda^a_n$ and $\blambda^a_{n+1}$,
and we define the overall interpolation sequences
\begin{equation*}
\Lambda^a := \cup_{n\in\ZN}\blambda^a_n\;.
\end{equation*}
In general we allow for $a\geq 1$, but we will see that $a=1$ is generally sufficient for reconstruction.

As explained in Sec.~\ref{sec:PWSpaces}, $x\in\ourL2$ can be perfectly reconstructed by \eqref{equ:TimeReconstruct} if $\Lambda^a$ is complete interpolating for $\ourPW$.

\begin{condB}
A sampling system as in Fig.~\ref{fig:MeasureSetup} is said to satisfy Condition~B if the coefficients $\{\lambda_{k}\}^{K}_{k=1}$ in \eqref{equ:ModSig} are such that $\Lambda^{a}$ is complete interpolating for $\ourPW$ and satisfies \eqref{equ:CondSampPoint} for a certain $1 \leq a < K$.
\end{condB}

Note that interpolating sequences fulfilling Condition B are $\beta$-periodic.
In general it is hard to characterize sets which fulfill this condition. However one specific example is when $\Lambda^a$ equals to the set of zeros of a $\beta$-periodic sine-type function of type $T'/2 \leq T/2$ (see, e.g., \cite{Young2001_Nonharmonic,Levin1997_Lectures}). Sine-type functions are entire functions $f$ of exponential type $T'/2$ with simple and isolated zeros and 
for which there exist positive constants $A,B,H$ such that
\begin{equation*}
Ae^{\frac{T'}{2}|\eta|} \leq |f(\xi + \I \eta)| \leq B e^{\frac{T'}{2}|\eta|}\;,
\quad\text{for}\ |\eta| > H\;.
\end{equation*}
Moreover, sequences which differ from interpolation patterns derived from sine-type functions only by their imaginary parts, are still zeros of sine-type functions \cite{Levin_Perturbations}. The complete interpolating property is also preserved under little shifts in the real part using Katsnelson's theorem \cite{Levin1997_Lectures}.
Apparently it is possible to construct many non-uniform complex interpolation sequences $\Lambda^{a}$ which satisfy Condition~B. One particular simple construction is obtained by starting with the zeros of the sine-type function $\sin( \frac{T'}{2}z )$, which has equally spaced zeros on the real axis (cf. Example~\ref{exa:Shannon}).

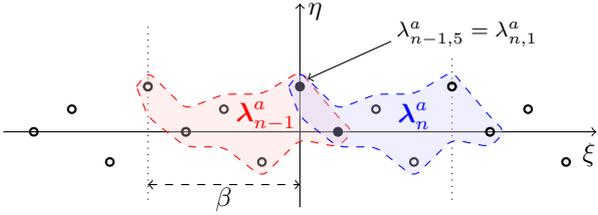
\begin{figure}[t]
\begin{center}
\begin{tikzpicture}
	\draw[->] (-3.9,0) -- (3.9,0);
	\draw (3.8,-0.3) node {$\xi$};
	\draw[->] (0,-1.0) -- (0,1.7);
	\draw (0.2,1.6) node {$\eta$};
	\draw[dotted] (-2,-0.9) -- (-2,1.4);
	\draw[dotted] (2,-0.9) -- (2,1.0);
	\draw[<->,dashed] (-2,-0.7) -- (0,-0.7);
	\draw (-1,-0.9) node {$\beta$};
	\draw[black, thick] (-3.5,0.0) circle (0.05);
	\draw[black, thick] (-3.0,0.3) circle (0.05);
	\draw[black, thick] (-2.5,-0.4) circle (0.05);
	\draw[black, thick] (-2.0,0.6) circle (0.05);
	\draw[black, thick] (-1.5,0.0) circle (0.05);		
	\draw[black, thick] (-2.0,0.6) circle (0.05);
	\draw[black, thick] (-1.5,0.0) circle (0.05);
	\draw[black, thick] (-1.0,0.3) circle (0.05);
	\draw[black, thick] (-0.5,-0.4) circle (0.05);
	\draw[black, thick] (0.0,0.6) circle (0.05);
	\draw[black, thick] (0.5,0.0) circle (0.05);
	\filldraw[fill = red!20!white, fill opacity=0.3, draw = red, dashed, rounded corners = 1mm]
	           (-2.15,0.6) -- (-2.0,0.8) -- (-1.5,0.3) -- (-1.0,0.5) -- (-0.5,0.5) -- (0.0,0.8) -- (0.5,0.2) -- (0.7,0.0) --
	           (0.5,-0.2) -- (0.0,-0.1) -- (-0.5,-0.6) -- (-1.0,-0.2) -- (-1.5,-0.2) -- (-2.0,0.2) -- (-2.15,0.6);
	\draw[red] (-0.45,0.21) node{$\blambda^{a}_{n-1}$};
	\filldraw[black, thick] (0.0,0.6) circle (0.05);
	\filldraw[black, thick] (0.5,0.0) circle (0.05);
	\draw[black, thick] (1.0,0.3) circle (0.05);
	\draw[black, thick] (1.5,-0.4) circle (0.05);
	\draw[black, thick] (2.0,0.6) circle (0.05);
	\draw[black, thick] (2.5,0.0) circle (0.05);
	\filldraw[fill = blue!20!white, fill opacity=0.3, draw = blue, dashed, rounded corners = 1mm]
	           (-0.15,0.6) -- (0.0,0.8) -- (0.5,0.3) -- (1.0,0.5) -- (1.5,0.5) -- (2.0,0.8) -- (2.5,0.2) -- (2.7,0.0) --
	           (2.5,-0.2) -- (2.0,-0.1) -- (1.5,-0.6) -- (1.0,-0.2) -- (0.5,-0.2) -- (0.0,0.2) -- (-0.15,0.6);
	\draw[blue] (1.5,0.21) node{$\blambda^{a}_{n}$};
	\draw[black, thick] (2.0,0.6) circle (0.05);
	\draw[black, thick] (2.5,0.0) circle (0.05);
	\draw[black, thick] (3.0,0.3) circle (0.05);
	\draw[black, thick] (3.5,-0.4) circle (0.05);
	\draw[->] (1.2,1.2) -- (0.1,0.7);
	\draw[black] (2.2,1.3) node{\footnotesize $\lambda^{a}_{n-1,5} = \lambda^{a}_{n,1}$};
\end{tikzpicture}
\end{center}
\caption{Illustration for the choice of interpolation points in the complex plane for $K=6$ in \eqref{equ:ModSig} and an overlap $a = 2$.}
\label{fig:SpampPoints}
\end{figure}

\section{Signal Reconstruction -- Main Results}
\label{sec:MainResult}

We assume a sampling scheme as described in Section~\ref{sec:Measurement} (cf.~Fig.~\ref{fig:MeasureSetup}) which satisfies Condition~A and B.
For this setup, we show that basically every $x \in \ourL2$ can be reconstructed from the samples in \eqref{equ:IntensMeasure}.
The proof provides an explicit algorithm for perfect signal recovery. 

\begin{theorem}
\label{thm:MainThm}
Let $x \in \ourL2$ be sampled according to the scheme in Section~\ref{sec:Measurement} which satisfies Condition~A and B, and
let $\bc = \{c^{(m)}_n\}^{m=1,\dots,M}_{n\in\ZN}$ be the sampling sequence in \eqref{equ:IntensMeasure}. 
If the set $\Hx(\Lambda^a_{O,n})$ contains at least one non-zero element for each $n\in\ZN$, then $x$ can be perfectly reconstructed from $\bc$ up to a constant phase.
\end{theorem}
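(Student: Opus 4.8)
The plan is to split the argument into two stages: (i) for each fixed $n$ recover the Fourier-sample vector $\Hbx_n$ up to an individual unimodular factor by the finite-dimensional result behind Condition~A, and (ii) use the overlap structure imposed by Condition~B, together with the non-vanishing hypothesis, to synchronise all these factors into one global phase; once this is done the interpolation machinery of Section~\ref{sec:PWSpaces} recovers $\Hx$ and hence $x$.

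For stage~(i), fix $n\in\ZN$. By \eqref{equ:IntensMeasure} the measurements $\{c^{(m)}_n\}_{m=1}^{M}$ are exactly the numbers $|\langle\Hbx_n,\balpha^{(m)}\rangle|^{2}$, so Condition~A lets us apply \eqref{equ:BalanReconstruct} and obtain the rank-one matrix $Q_{\Hbx_n}=\Hbx_n\Hbx_n^{*}$. Any rank-one factorisation of $Q_{\Hbx_n}$ (for instance its principal eigenvector rescaled by the square root of the principal eigenvalue) yields a vector $\mathbf{v}_n$ with $\mathbf{v}_n=\E^{\I\theta_n}\,\Hbx_n$ for some unknown $\theta_n\in[0,2\pi)$; here we use that $\Hbx_n\neq 0$, which is guaranteed because by hypothesis $\Hx$ is non-zero somewhere on $\Lambda^{a}_{O,n-1}\subseteq\blambda^{a}_n$. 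Thus, so far, every $\Hbx_n$ is known up to a phase, but these phases are a priori unrelated.

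For stage~(ii) --- the heart of the argument --- I would recover the phase differences. Writing \eqref{equ:CondSampPoint} with $n$ replaced by $n+1$ shows that the entries of $\Hbx_n$ and $\Hbx_{n+1}$ labelled by $\Lambda^{a}_{O,n}$ coincide, namely $[\Hbx_{n+1}]_i=[\Hbx_n]_{K-i+1}$ for $i=1,\dots,a$. The assumption that $\Hx(\Lambda^{a}_{O,n})$ has a non-zero element provides an index $j\in\{1,\dots,a\}$ with $[\Hbx_{n+1}]_j=[\Hbx_n]_{K-j+1}\neq 0$, so the corresponding entries of $\mathbf{v}_n$ and $\mathbf{v}_{n+1}$ are non-zero and
\begin{equation*}
\frac{[\mathbf{v}_{n+1}]_j}{[\mathbf{v}_n]_{K-j+1}}
=\frac{\E^{\I\theta_{n+1}}\,[\Hbx_{n+1}]_j}{\E^{\I\theta_n}\,[\Hbx_n]_{K-j+1}}
=\E^{\I(\theta_{n+1}-\theta_n)}\,.
\end{equation*}
Hence each difference $\theta_{n+1}-\theta_n$ is determined. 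Setting $\theta_0:=0$ (this is the single global phase that provably cannot be recovered, since the data depend only on $|\Hy^{(m)}|$) and telescoping, all $\theta_n$ are fixed, so we obtain $\Hbx_n=\E^{-\I\theta_n}\mathbf{v}_n$ for every $n$, i.e.\ the values $\Hx(\lambda)$ for all $\lambda\in\Lambda^{a}=\cup_{n}\blambda^{a}_n$, modulo the common factor $\E^{\I\theta_0}$.

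Finally, Condition~B says $\Lambda^{a}$ is complete interpolating for $\ourPW$, so by the discussion around \eqref{equ:TimeReconstruct} the samples $\Hx(\Lambda^{a})$ determine $\Hx$, and the inverse Fourier transform gives $x$, still only up to the constant phase $\E^{\I\theta_0}$. The main obstacle is stage~(ii): the result invoked in Condition~A is purely finite-dimensional and insensitive to how the branches of consecutive sample blocks are arranged, so one has to verify that the specific overlap pattern \eqref{equ:CondSampPoint} really supplies a shared coordinate between $\Hbx_n$ and $\Hbx_{n+1}$, and that the non-vanishing hypothesis on $\Hx(\Lambda^{a}_{O,n})$ keeps the phase-propagation chain from ever breaking; everything else is bookkeeping.
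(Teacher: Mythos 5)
Your proposal is correct and follows essentially the same route as the paper's proof: per-index recovery of the rank-one matrix via Condition~A and \eqref{equ:BalanReconstruct}, phase synchronisation across the overlapping interpolation points guaranteed non-degenerate by the hypothesis on $\Hx(\Lambda^a_{O,n})$, and final reconstruction through the complete interpolating property of Condition~B. Your telescoping of the phase differences $\theta_{n+1}-\theta_n$ is just a cleaner bookkeeping of the paper's successive propagation from $n_0$ via \eqref{equ:factorisation}.
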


\begin{proof}
According to Condition~B of the sampling system, $\Lambda^{a}$ is complete interpolating for $\ourPW$. Therefore the signal $x$ can be reconstructed from the vectors $\{\Hbx_{n}\}_{n\in\ZN}$ using
\eqref{equ:TimeReconstruct}. It remains to show that $\{\Hbx_{n}\}_{n\in\ZN}$ can be determined from $\bsc$.

Let $n\in\ZN$ be arbitrary. Since the sampling system satisfies Condition~A, we can use \eqref{equ:BalanReconstruct} to obtain the rank-$1$ matrix $Q_{n} := \Hbx_n\Hbx_n^*$ from the measurements $\{c^{(m)}_{n}\}^{M}_{m=1}$. Then $\Hbx_n \in \CN^K$ is obtained by factorizing $Q_{n}$. However, such a factorization is only unique up to a constant phase factor.
If the phase $\phi_{n,i}$ of one element $[\Hbx_n]_i$ is known, the vector $\Hbx_n$ can be completely determined from $Q_{n}$ by
\begin{equation}
\label{equ:factorisation}
\Hx(n\beta +\lambda_k) = \sqrt{[Q_{n}]_{k,k}}\, \E^{\I(\phi_{n,i} - \arg([Q_{n}]_{i,k}))}, \ \forall k \neq i\,.
\end{equation}

Assume that we start the determination of the sequence $\{\Hbx_{n}\}_{n\in\ZN}$ at a certain $n_{0}\in\ZN$.
In this initial step, we set the constant phase of $\Hbx_{n_{0}}$ arbitrarily.
In the next step, we determine $\Hbx_{n_{0}+1}$. After the factorization of $Q_{n_{0}+1}$, the vector $\Hbx_{n_{0}+1}$ is only determined up to a constant phase. 
However, since $\Lambda^{a}_{O,n_0}$ is non-empty, and because $\Hx(\Lambda^a_{O,n_{0}})$ contains at least one non-zero element, we have phase knowledge of at least one entry of $\Hbx_{n_{0}+1}$, say $\Hx(\lambda^a_{n_{0}+1,i})$, where $\lambda^a_{n_{0}+1,i}$ is an overlapping interpolation point of $\blambda^a_{n_{0}}$ and $\blambda^a_{n_{0}+1}$.
Thus, we can completely determine $\Hbx_{n_{0}+1}$ and successively all  $n = n_{0}\pm 1, n_{0}\pm 2, \dots$ using \eqref{equ:factorisation} to obtain $\Hx(\Lambda^a)e^{\I \theta_0}$.

Note that the arbitrary setting of the phase of the initial vector $\Hbx_{n_0}$ yields a constant phase shift $\theta_0 \in [-\pi,\pi]$ for all $\Hbx_{n}$ which persists also after the reconstruction of the time signal by \eqref{equ:TimeReconstruct}.
\end{proof}

Theorem~\ref{thm:MainThm} states that $x\in\ourL2$ can only be reconstructed if $\Hx\in\ourPW$ has at most $a-1$ zeros on the overlapping interpolation sets $\Lambda^{a}_{O,n}$.
However, this restriction is not too limiting, since it is known that the zeros of an entire function of exponential type can not be arbitrarily dense in $\CN$.
For example, defining $\mathcal{Z}_{n} := \{z \in \CN : n \pi/T < |z| \leq (n+1)\pi/T\}$, the result in \cite{Supper2002_Zeros} states that for every $x \in \ourPW$ there exist only finitely many sets $\mathcal{Z}_{n}$ which contain more than one zero of $x$.
Consequently, by choosing the spacing of the interpolation points in the overlapping sets $\Lambda^{a}_{O,n}$ less than $\pi/T$,  it is very unlikely that a randomly chosen function from $\ourPW$ fails to satisfy the condition of Theorem~\ref{thm:MainThm}, especially for $a>1$.

To avoid such pathological cases, we may a priori restrict the function space allowed in Theorem~\ref{thm:MainThm} to prevent $x$ from having zeros in $\Lambda^a$. 
To this end, we first state an extension of a well-known lemma by Duffin, Schaeffer \cite{DuffinSchaeffer_38}. The straightforward proof is omitted here.

\begin{lemma}
\label{lem:DufSchaef}
 Let $\Hx(z) \in \ourPW$ be an entire function of $z=\xi+\I \eta$ satisfying $|\Hx(\xi)|\leq M$ on the real axis. 
Then for every $T'>T$ the function
\begin{equation}
\label{equ:PlusCos}
  \Hy(z) = M\cos(\tfrac{T'}{2}z) - \Hx(z)
\end{equation}
belongs to the Bernstein space $\ourB$ and there exists a constant $H = H(T,T')$ such that $|\Hy(z)|>0\; \forall z:|\eta|>H$.
\end{lemma}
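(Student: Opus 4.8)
The plan is to treat the two assertions of Lemma~\ref{lem:DufSchaef} separately; both are elementary once we invoke the classical growth estimate for entire functions of exponential type, which is why the authors can call the argument straightforward.

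First, for membership in $\ourB$: the term $M\cos(\tfrac{T'}{2}z)$ is entire of exponential type exactly $T'/2$, while $\Hx \in \ourPW$ is of exponential type $T/2 < T'/2$; hence $\Hy(z) = M\cos(\tfrac{T'}{2}z) - \Hx(z)$ is entire of exponential type $T'/2$. On the real axis $|\cos(\tfrac{T'}{2}\xi)| \le 1$ and $|\Hx(\xi)| \le M$ by hypothesis (recall that any $\Hx \in \ourPW$ is automatically continuous and bounded on $\RN$, being the Fourier transform of a function in $\L^{1}(\TT) \supseteq \ourL2$), so $|\Hy(\xi)| \le 2M$ for every $\xi \in \RN$. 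Therefore $\Hy \in \ourB$.

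Second, for the zero-free strip, I would estimate $|\Hy(z)|$ from below on horizontal lines $\{z=\xi+\I\eta : |\eta| \ge H\}$ by $|\Hy(z)| \ge M\,|\cos(\tfrac{T'}{2}z)| - |\Hx(z)|$. Writing $\cos w = \tfrac12(\E^{\I w}+\E^{-\I w})$ and applying the reverse triangle inequality gives the standard bound $|\cos(\tfrac{T'}{2}z)| \ge \sinh(\tfrac{T'}{2}|\eta|)$. For the upper bound on $\Hx$ I invoke the classical Bernstein-type estimate (a Phragm\'en--Lindel\"of argument applied to $\Hx(z)\,\E^{\pm\I\frac{T}{2}z}$ in the upper and lower half-planes): since $\Hx$ is of exponential type $T/2$ and $|\Hx(\xi)|\le M$ on $\RN$, one has $|\Hx(\xi+\I\eta)| \le M\,\E^{\frac{T}{2}|\eta|}$ for all $z\in\CN$. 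Combining the two estimates,
\begin{equation*}
  |\Hy(z)| \;\ge\; M\left(\sinh(\tfrac{T'}{2}|\eta|) - \E^{\frac{T}{2}|\eta|}\right),
\end{equation*}
and because $T' > T$ the bracket is strictly positive once $|\eta|$ is large. Concretely, since $\tfrac12\E^{-\frac{T'}{2}|\eta|} \le \tfrac12\E^{\frac{T}{2}|\eta|}$, it suffices to have $\tfrac12\E^{\frac{T'}{2}|\eta|} > \tfrac32\E^{\frac{T}{2}|\eta|}$, i.e. $|\eta| > H := \tfrac{2\ln 3}{T'-T}$, a threshold depending only on $T$ and $T'$ as claimed. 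Hence $|\Hy(z)| > 0$ whenever $|\eta| > H$.

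The only ingredient beyond bookkeeping is the growth bound $|\Hx(\xi+\I\eta)| \le M\,\E^{\frac{T}{2}|\eta|}$ \emph{with the constant $M$ explicit}: it is precisely this explicitness that forces $H$ to depend on $T,T'$ alone rather than on $\Hx$. This is the textbook extension of Bernstein's inequality, so I expect no genuine obstacle; the numerical value of $H$ above is deliberately crude and can be sharpened with a little more care on the $\sinh$ estimate.
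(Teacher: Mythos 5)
Your proof is correct: the paper omits its own proof of this lemma as ``straightforward,'' and your argument --- exponential type plus real-axis boundedness giving $\Hy\in\ourB$, then the lower bound $|\cos(\tfrac{T'}{2}z)|\ge\sinh(\tfrac{T'}{2}|\eta|)$ combined with the Phragm\'en--Lindel\"of/Bernstein estimate $|\Hx(\xi+\I\eta)|\le M\E^{\frac{T}{2}|\eta|}$ --- is exactly the standard Duffin--Schaeffer-type argument the authors evidently intend. Your explicit threshold $H=\tfrac{2\ln 3}{T'-T}$ checks out and depends only on $T$ and $T'$ (with the harmless implicit assumption $M>0$, without which the lemma itself is vacuous).
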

The Bernstein space $\ourB$ is the set of all entire functions of exponential type $T'/2$ whose restriction to $\RN$ is in $\L^{\infty}(\RN)$.
Upon this we can establish the following corollary for functions in $\ourRT = \{x\in \ourL2:\|x\|_{\L^1(\TT)}\leq M\}$.
\begin{corollary}
\label{cor:Cor2}
Let $x\in \ourRT$ be sampled according to the scheme in Sec.~\ref{sec:Measurement}.
Then there exist interpolation sequences $\Lambda^{a}$ with overlap $a\geq1$ such that every $x \in \ourRT$ can be perfectly reconstructed (up to a constant phase) from the measurements \eqref{equ:IntensMeasure}.
\end{corollary}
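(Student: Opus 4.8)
The plan is to reduce Corollary~\ref{cor:Cor2} to Theorem~\ref{thm:MainThm}: it suffices to exhibit one sampling system meeting Conditions~A and~B whose overlap sets $\Lambda^{a}_{O,n}$ are never entirely contained in the zero set of $\Hx$ for $x\in\ourRT$. Condition~A only restricts the coefficients $\balpha^{(m)}$, and a valid family exists independently of everything else (for $K=2$ the explicit choice in Sec.~\ref{sec:Balan} works), so the construction reduces to choosing the sequence $\Lambda^{a}$, equivalently the points $\{\lambda_k\}_{k=1}^{K}$. I would in fact aim for the slightly stronger conclusion that $\Hx$ has no zero on $\Lambda^{a}$ at all, which certainly forces $\Hx(\Lambda^{a}_{O,n})$ to contain a non-zero element for every $n$, so that Theorem~\ref{thm:MainThm} applies and delivers reconstruction up to a constant phase, together with the algorithm in its proof.

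The only handle on $\ourRT$ is the elementary bound $|\Hx(\xi)| \leq \int_{\TT}|x(t)|\,\d t = \|x\|_{\L^{1}(\TT)} \leq M$ for every $\xi\in\RN$. This is precisely the hypothesis of Lemma~\ref{lem:DufSchaef}: fixing any $T'>T$ and putting $H=H(T,T')$ as in that lemma, for \emph{every} $x\in\ourRT$ the auxiliary function $\Hy = M\cos(\tfrac{T'}{2}z) - \Hx$ lies in $\ourB$ and is zero-free on $\{z:|\eta|>H\}$, with $H$ independent of $x$. Since $M\cos(\tfrac{T'}{2}z)$ grows like $\E^{(T'/2)|\eta|}$ whereas $\Hx$ is only of type $T/2$, off the strip $|\eta|\leq H$ the function $\Hy$ is a bounded, zero-free perturbation of $M\cos(\tfrac{T'}{2}z)$, and I would use this to localize, uniformly over $x\in\ourRT$, the admissible position of the interpolation points relative to the real axis.

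Concretely, I would take $T'$ slightly larger than $T$, compute $H=H(T,T')$, and build $\Lambda^{a}$ so that all of its points --- in particular all points of every $\Lambda^{a}_{O,n}$ --- lie in the region $|\eta|>H$. Such a complete interpolating sequence for $\ourPW$ exists: starting from the zeros $\{2\pi n/T\}_{n\in\ZN}$ of the sine-type function $\sin(\tfrac{T}{2}z)$ and applying the imaginary shift $z\mapsto z-\I c$ with $c>H$ yields $\{2\pi n/T+\I c\}_{n\in\ZN}$, which is again the zero set of a sine-type function of type $T/2$ --- the exponentials $\{\E^{ct}\E^{-\I 2\pi n t/T}\}_{n\in\ZN}$ still form a Riesz basis for $\ourL2$ because $\E^{ct}$ is bounded above and below on $\TT$ --- and is therefore complete interpolating for $\ourPW$; it is $\beta$-periodic with $\beta=2\pi(K-a)/T$, and grouping it into consecutive overlapping blocks of length $K$ with overlap $a$ realizes the pattern~\eqref{equ:CondSampPoint}, so Condition~B holds. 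With $\Lambda^{a}$ placed off the strip $|\eta|\leq H$, Lemma~\ref{lem:DufSchaef} (comparing $\Hx$ with $M\cos(\tfrac{T'}{2}z)$ through $\Hy$) is used to guarantee that no $\Hx$ with $x\in\ourRT$ vanishes on $\Lambda^{a}$, and the corollary then follows from Theorem~\ref{thm:MainThm}. A variant of this last step, not using Lemma~\ref{lem:DufSchaef}, would instead invoke the density bound on the zeros of functions in $\ourPW$ (the result of \cite{Supper2002_Zeros} quoted above), choosing the overlap $a$ large and the spacing of the points in $\Lambda^{a}_{O,n}$ below $\pi/T$, so that $\Hx$ can carry strictly fewer than $a$ zeros in each cluster $\Lambda^{a}_{O,n}$.

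The step I expect to be the main obstacle is the passage from ``$\Hy$ is zero-free for $|\eta|>H$'' --- a statement about the auxiliary function --- to ``$\Hx$ does not vanish on $\Lambda^{a}$'', uniformly over the whole ball $\ourRT$, while simultaneously keeping the rigid block/overlap structure~\eqref{equ:CondSampPoint} and the complete-interpolation property of Condition~B. Verifying that the imaginary shift of size $c>H$ preserves the sine-type character and the complete interpolating property (via the Katsnelson- and Levin-type perturbation results quoted after Condition~B), that the overlap points stay uniformly far from the real axis over all $n$, and that the density of $\Lambda^{a}$ remains exactly $T/(2\pi)$ after the shift and the regrouping, is the technical heart of the proof; once this is in place the reduction to Theorem~\ref{thm:MainThm} is immediate.
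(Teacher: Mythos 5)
Your reduction stands or falls with the claim that one can choose a \emph{fixed} sequence $\Lambda^{a}$ on which no $\Hx$ with $x\in\ourRT$ vanishes, and that claim is false; moreover it is not what Lemma~\ref{lem:DufSchaef} provides. The lemma locates the zeros of the \emph{auxiliary} function $\Hy(z)=M\cos(\tfrac{T'}{2}z)-\Hx(z)$ inside the strip $|\eta|\leq H$ (because the cosine dominates $\Hx$ off the strip); it says nothing about the zeros of $\Hx$ itself, which can sit anywhere in $\CN$. Indeed, for any prescribed point $\lambda\in\CN$ the function $\Hx(z)=\epsilon\,\sin\!\big(\tfrac{T}{4}(z-\lambda)\big)\sinc\!\big(\tfrac{T}{4}(z-\lambda)\big)$ is entire of exponential type $T/2$ and square integrable on $\RN$, hence of the form \eqref{equ:PaleyWiener} with some $x\in\ourL2$, and for $\epsilon$ small enough $\|x\|_{\L^1(\TT)}\leq M$, i.e.\ $x\in\ourRT$, while $\Hx(\lambda)=0$. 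So no placement of the interpolation points (in particular not the imaginary shift beyond $H$) can make $\Hx$ zero-free on $\Lambda^{a}$ uniformly over the ball $\ourRT$, and the direct appeal to Theorem~\ref{thm:MainThm} collapses. Your fallback via \cite{Supper2002_Zeros} does not repair this for \emph{every} $x\in\ourRT$ either: that result still allows finitely many clusters to contain many zeros, and a single fully annihilated overlap set $\Lambda^{a}_{O,n}$ already breaks the phase chain, leaving the two half-axes of blocks with unrelated phase factors rather than one global constant.

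The paper's proof avoids this by changing what is measured, not by trying to steer $\Lambda^{a}$ around the zeros of $\Hx$: a known cosine is added before the intensity measurement, so the factorization and phase-propagation argument of Theorem~\ref{thm:MainThm} is applied to $\Hy$, which by Lemma~\ref{lem:DufSchaef} is guaranteed zero-free at interpolation points chosen (via the perturbation result \cite{Levin_Perturbations}) as zeros of a sine-type function of type $T'/2$ with imaginary parts beyond $H$. Since $\Hy$ lies in the Bernstein space $\ourB$ rather than in $\ourPW$, its recovery from $\Hy(\Lambda^{a})\E^{\I\theta_0}$ uses the sampling theorem of \cite{MoenichBoche_SP2010}; afterwards one forms $M\cos(\tfrac{T'}{2}z)-\Hy(z)\E^{\I\theta_0}=\Hx(z)\E^{\I\theta_0}+M(1-\E^{\I\theta_0})\cos(\tfrac{T'}{2}z)$ and, because $T'>T$, the cosine term contributes only outside $\TT$ in the time domain, so restricting to $\TT$ returns $x$ up to the constant phase. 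This mechanism --- modifying the measured signal and removing the known cosine at the end --- is precisely what your proposal is missing, and it is what makes the statement uniform over $\ourRT$; the parts of your argument that do survive (the bound $|\Hx(\xi)|\leq M$, Condition~A for $K=2$, and the Riesz-basis argument for imaginary shifts of the exponentials) are not the critical ones.
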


\begin{proofsketch}
The bounded $\L^{1}$-norm of $x$ implies that $|\Hx(\xi)| \leq M$ for all $\xi\in\RN$.
Fix $T'>T$, then $\Hy$, defined by \eqref{equ:PlusCos}, has zeros only for $|\eta| \leq H$ by Lemma~\ref{lem:DufSchaef}.
Choose $\Lambda^a$ as the zero set of a sine-type function of type $T'/2$. 
Then by \cite{Levin_Perturbations} we can shift the imaginary parts of the interpolation points such that $|\eta_k|>H$ for all $k$ while $\Lambda^a$ remains to be the zero set of a sine-type function.
Now assume that the measurement is preceded by an addition of a cosine as in \eqref{equ:PlusCos} such that as in the proof of Theorem~\ref{thm:MainThm}, the factorization step yields 
$\Hby_{n} := [\Hy(n\beta + \lambda_{1}),\dots,\Hy(n\beta + \lambda_{K})]^{\T} = M\Hbu_n - \Hbx_n$ with $\Hbu_{n} := [\cos(\frac{T}{2}[n\beta + \lambda_{1}]),\dots,\cos(\frac{T}{2}[n\beta + \lambda_{K}])]^{\T}$.
We then obtain $\Hy(\Lambda^a)e^{\I \theta_0}$ since the overlapping interpolation points do not coincide with zeros of $\Hx$ and the phase information can be propagated. 
Since $\Hy\in \ourB$ and because $\Lambda^a$ is the zero set of a sine-type function, we know by \cite[Theorem~2]{MoenichBoche_SP2010} that 
it is possible to stably reconstruct $\Hy(z)\,\E^{\I \theta_0}$ from $\Hy(\Lambda^a)\,\E^{\I \theta_0}$ using $\Hy(z) = \sum_{n\in\ZN}\Hy(\lambda_n)\widehat{\psi}_{n}(z)$. 
Note that it is not possible to simply extract $\Hx$ using \eqref{equ:PlusCos} since $\theta_0$ is unknown.
However we can determine
\begin{align*}
  \Hx'(z)
  &= M \cos(\tfrac{T'}{2}z) - \Hy(z)\,\E^{\I \theta_0}\\
  &=\Hx(z)\, \E^{\I \theta_0} + M(1-e^{\I \theta_0})\cos(\tfrac{T'}{2}z).
\end{align*}
Applying the inverse Fourier transform to $\Hx'$ yields $x'(t) = x(t)e^{\I\theta_0}\, \forall t\in \TT$, using that $T'>T$.
It is therefore possible to reconstruct the time signal $x$ up to a constant phase.
\end{proofsketch}

\section{Discussion and Outlook}
\label{sec:Summary}

To determine the sampling system in Fig.\ref{fig:MeasureSetup}, one has to fix $K$, $M$, $a$ and $\beta$.
The number $K \geq 2$ can be chosen arbitrarily. Then $M = K^{2}$ is fixed, and $1\leq a\leq K-1$. 
The sampling period $\beta$ has to be chosen such that the sampling system satisfies Condition~B and in particular that $\Lambda^{a}$ is complete interpolating for $\ourPW$.
As discussed before, one possible choice may start with the zeros of the function $\sin(\frac{T'}{2} z)$ with $T'\geq T$.
Then $\delta := \lambda_{k} - \lambda_{k-1} = 2\pi/T'$ such that $\beta = (K-a)\delta$.
Therewith, the total sampling rate becomes
\begin{equation*}
R(a,K) = \frac{M}{\beta} = \frac{K^2}{(K-a)\delta} = \frac{K^2}{K-a} \frac{T'}{2\pi} = \frac{K^2}{K-a}\frac{T'}{T} R_{\mathrm{Ny}}
\end{equation*} 
where $R_{\mathrm{Ny}} := T/(2\pi)$ is the Nyquist rate.
It is apparent that $R(a,K)$ grows asymptotically proportional with $K$ and the oversampling factor $T'/T$ and increases with the overlap $a$. 
The minimal sampling rate $R(a,K) = 4 R_{\mathrm{Ny}}$ is obtained for $K=2$, $T' = T$, and $a=1$.
Since $T'/T$ can be made arbitrarily close to $1$, and in connection with Theorem~\ref{thm:MainThm} and Cor.~\ref{cor:Cor2} this shows that a sampling rate greater $4 R_{\mathrm{Ny}}$ is sufficient for signal recovery based on the magnitude of the signal samples. This corresponds to the findings in \cite{Balan1} for finite dimensional spaces, where it was shown that basically any $x \in \CN^{N}$ can be reconstructed from $M \geq 4N -2$ magnitude samples.

Finally, we note that the above framework can be applied exactly the same way for bandlimited signals. To this end, one only has to exchange the time and frequency domain. Then the modulators in Fig.~\ref{fig:MeasureSetup} have to be replaced by linear filters and the sampling of the magnitudes has to be done in the time domain.

\bibliographystyle{IEEEtran}
\bibliography{IEEEabrv,pub}

\begin{thebibliography}{10}
\providecommand{\url}[1]{#1}
\csname url@samestyle\endcsname
\providecommand{\newblock}{\relax}
\providecommand{\bibinfo}[2]{#2}
\providecommand{\BIBentrySTDinterwordspacing}{\spaceskip=0pt\relax}
\providecommand{\BIBentryALTinterwordstretchfactor}{4}
\providecommand{\BIBentryALTinterwordspacing}{\spaceskip=\fontdimen2\font plus
\BIBentryALTinterwordstretchfactor\fontdimen3\font minus
  \fontdimen4\font\relax}
\providecommand{\BIBforeignlanguage}[2]{{%
\expandafter\ifx\csname l@#1\endcsname\relax
\typeout{** WARNING: IEEEtran.bst: No hyphenation pattern has been}%
\typeout{** loaded for the language `#1'. Using the pattern for}%
\typeout{** the default language instead.}%
\else
\language=\csname l@#1\endcsname
\fi
#2}}
\providecommand{\BIBdecl}{\relax}
\BIBdecl

\bibitem{Oppenheim_Phase_80}
M.~H. Hayes, J.~S. Lim, and A.~V. Oppenheim, ``{Signal Reconstruction from
  Phase or Magnitude},'' \emph{{IEEE} Trans. Acoust., Speech, Signal Process.},
  vol. ASSP-28, no.~6, pp. 672--680, Dec. 1980.

\bibitem{Fienup1982_PhaseRetrieval}
J.~R. Fienup, ``{Phase retrieval algorithms: a comparison},'' \emph{{Applied
  Optics}}, vol.~21, no.~15, pp. 2758--2769, Aug. 1982.

\bibitem{Balan1}
R.~Balan, P.~G. Casazza, and D.~Edidin, ``{On signal reconstruction without
  phase},'' \emph{{Appl. Comput. Harmon. Anal.}}, vol.~20, no.~3, pp. 345--356,
  May 2006.

\bibitem{Balan_Painless}
R.~Balan, B.~G. Bodmann, P.~G. Casazza, and D.~Edidin, ``{Painless
  Reconstruction from Magnitudes of Frame Coefficients},'' \emph{{J. Fourier
  Anal. Appl.}}, vol.~15, no.~4, pp. 488--501, Aug. 2009.

\bibitem{Vetterli2011_PRbeyondFienup}
Y.~M. Lu and M.~Vetterli, ``{Sparse Spectral Factorization: Unicity and
  Reconstruction Algorithms},'' in \emph{{Proc. 36th Intern. Conf. on
  Acoustics, Speech, and Signal Processing (ICASSP)}}, Prague, Czech Republic,
  May 2011, pp. 5976--5979.

\bibitem{CandesEldar_PhaseRetrieval}
E.~J. Cand{\`e}s, Y.~C. Eldar, T.~Strohmer, and V.~Voroninski, ``{Phase
  Retrieval via Matrix Completion},'' \emph{{SIAM Journal on Imaging
  Sciences}}, 2013, to appear.

\bibitem{Thakur2011}
G.~Thakur, ``Reconstruction of bandlimited functions from unsigned samples,''
  \emph{Journal of Fourier Analysis and Applications}, vol.~17, no.~4, pp.
  720--732, 2011.

\bibitem{Young2001_Nonharmonic}
R.~M. Young, \emph{An introduction to nonharmonic Fourier series}.\hskip 1em
  plus 0.5em minus 0.4em\relax Cambridge: Academic Press, 2001.

\bibitem{Christensen_Frames}
O.~Christensen, \emph{{An Introduction to Frames and Riesz Bases}}.\hskip 1em
  plus 0.5em minus 0.4em\relax Boston: Birkh{\"a}user, 2003.

\bibitem{Pavlov_Convergence}
B.~Pavlov, ``Basicity of an exponential system and muckenhoupt's condition,''
  in \emph{Dokl. Akad. Nauk SSSR}, vol. 247, no.~1, 1979, pp. 37--40.

\bibitem{Levin1997_Lectures}
B.~Levin, \emph{Lectures on entire functions}.\hskip 1em plus 0.5em minus
  0.4em\relax American Mathematical Society, 1997.

\bibitem{Levin_Perturbations}
B.~Levin and I.~Ostrovskii, ``Small perturbations of the set of roots of
  sine-type functions,'' \emph{Izv. Akad. Nauk SSSR Ser. Mat}, vol.~43, no.~1,
  pp. 87--110, 1979.

\bibitem{Supper2002_Zeros}
R.~Supper, ``Zeros of entire functions of finite order,'' \emph{Journal of
  Inequalities and Applications}, vol. 2002.

\bibitem{DuffinSchaeffer_38}
R.~Duffin and A.~C. Schaeffer, ``{Some Properties of Functions of Exponential
  Type},'' \emph{{Bull. Amer. Math. Soc.}}, vol.~44, pp. 236--240, Apr. 1938.

\bibitem{MoenichBoche_SP2010}
U.~J. M{\"o}nich and H.~Boche, ``Non-equidistant sampling for bounded
  bandlimited signals,'' \emph{Signal processing}, vol.~90, no.~7, pp.
  2212--2218, 2010.

\end{thebibliography}

\end{document}